\documentclass[letterpaper,journal]{IEEEtran}

\usepackage[T1]{fontenc}
\usepackage[utf8]{inputenc}
\usepackage{amsmath,amssymb,amsfonts}
\usepackage{amsthm}
\usepackage{booktabs}
\usepackage{tabularx}
\usepackage{array}
\usepackage{url}
\usepackage{cite}
\usepackage{hyperref}
\usepackage{tikz}
\usepackage{pgfplots}
\pgfplotsset{compat=1.18}
\usetikzlibrary{arrows.meta,positioning,fit,backgrounds}

\newcommand\submittedtext{%
  \footnotesize This work has been submitted to the IEEE for possible publication. Copyright may be transferred without notice, after which this version may no longer be accessible.}

\newcommand\submittednotice{%
\begin{tikzpicture}[remember picture,overlay]
\node[anchor=south,yshift=10pt] at (current page.south) {\fbox{\parbox{\dimexpr0.65\textwidth-\fboxsep-\fboxrule\relax}{\submittedtext}}};
\end{tikzpicture}%
}

\newcommand{\Psat}{P_{\mathrm{sat}}} 
\newcommand{\Gmin}{G_{\psi}^{\min}}

\newcommand{\GrefLoo}{G_{\psi}^{\mathrm{ref,loo}}}
\newcommand{\Vpsi}{V_{\psi}}

\newtheorem{theorem}{Theorem}
\newtheorem{lemma}{Lemma}

\newtheorem{definition}{Definition}

\title{Satisfaction Is Not Explanation: Auditing Vacuity and Training Influence in Temporal-Logic-Guided Reinforcement Learning}
\author{Lorenzo~Bacchiani}

\begin{document}

\maketitle
\submittednotice

\begin{abstract}
A reinforcement learning policy that satisfies its temporal-logic
specification has passed a test, not an assurance argument. The clause that
matters to a reviewer may never have mattered to training: it may have been
avoided entirely, forced by the environment regardless of what the policy
learned, or redundant next to the ordinary task reward. Satisfaction
probability and task return cannot tell any of this apart.

This paper introduces an audit layer that can. It measures whether a
specification clause was actually exercised, whether that role was forced or
chosen, and whether the obvious way to test causation, weakening the clause
and retraining, is even valid. Often it is not: we prove that comparable
weaker/stronger training objectives can share perfect optima under standard
acceptance-derived rewards, show related ablation hazards across a large
corpus of published specifications, and then show that a properly designed
intervention detects the effect it should. Across standard reinforcement learning benchmarks and
published external artifacts, the audit layer separates six regimes that a
single satisfaction number collapses into one. A policy that satisfies its
specification has answered whether. This paper asks why.

\end{abstract}

\begin{IEEEkeywords}
temporal logic, reinforcement learning, vacuity, formal methods, software assurance
\end{IEEEkeywords}

\section{Introduction}

Suppose a reinforcement-learning policy is deployed with a formal guarantee:
it satisfies its temporal-logic specification with probability one. A
practitioner is tempted to read this as an assurance argument. That reading is
not always available. The policy may never encounter the situation the
specification addresses. The environment may make the specification's
outcome unavoidable no matter how the policy behaves. The reward the policy
was actually trained on may already produce the same behavior on its own, so
the specification never shaped training at all. In every one of these cases
the policy genuinely satisfies the formula, and satisfaction probability
alone cannot tell a reviewer which case occurred. 
This gap matters because temporal logic already plays several different
roles inside a reinforcement-learning pipeline. A formula can define a task
evaluated only at the end of training, source a sparse terminal reward,
shape a continuous reward signal, or sit beside the learner as a monitor
that never touches the reward at all. The standard empirical summary,
satisfaction probability alongside ordinary task return, treats these roles
as interchangeable. They are not: a policy trained under one integration
mechanism can satisfy the same formula as a policy trained under another
while relying on the specification in a completely different way, or not
relying on it at all.

Our work develops an audit layer that asks, for a learned policy and a
specification clause, what happened to that clause in the behavior the
policy learned. The audit is
organized around three questions:
\begin{itemize}
\item \emph{AQ1: did the learned policy satisfy
the formula?} This is the ordinary success question, and if the answer is no
then the right explanation is failure or learnability, not vacuity.
\item \emph{AQ2: did the audited clause matter on the successful traces?} This
asks whether the clause was actually exercised once the policy succeeded,
or whether satisfaction would have held without it.
\item \emph{AQ3: what explains that clause status?} A clause may be
irrelevant because the environment forced it, because the learner selected a
vacuous policy among matched alternatives, because satisfying it trades off
against native task reward, because the chosen mechanism failed to learn a
satisfying policy, or because the proposed training-intervention test is
invalid. These questions separate outcomes that satisfaction and return
alone conflate.
\end{itemize}

On the last question, the obvious experiment is to weaken the clause,
retrain, and compare policies. That test is often invalid. We prove that
weakened and unweakened training objectives can have overlapping perfect
optima, so any observed difference may be nothing but tie-breaking. A
corrected experiment, comparing two genuinely independent objectives
instead, confirms that training influence is measurable once the test itself
is valid. We evaluate the audit layer on standard reinforcement-learning benchmarks
drawn from the literature~\cite{dietterich2000maxq,icarte2018reward} and on
published external artifacts this paper did not design: a benchmark for
shielded safe reinforcement learning~\cite{alshiekh2018shielding} and a suite
of gridworlds built to expose specification failure
modes~\cite{leike2017aisafetygridworlds}. We additionally audit a corpus of
173 published temporal-logic reinforcement-learning
specifications~\cite{jackermeier2025deepltl,icarte2018reward} to check
whether the hazard behind our validity screen is common in practice rather
than an artifact of our own formulas.

The contributions of this work can be summarized as follows: 
\begin{itemize}
  \item An audit protocol that classifies, for a specification clause, why
  it holds or fails through six regimes that satisfaction and return alone
  cannot distinguish.
  \item Exact, checkable diagnostics for conditional vacuity, environmental
  floors, and matched-reference excess, so a regime label is a measurement
  rather than a narrative choice made after the fact.
  \item A theorem proving that a common training-influence test, weakening a
  clause and retraining, can fail to detect an effect even when the training
  mechanism genuinely matters, together with a corrected intervention that we
  confirm does detect it.
  \item An evaluation across standard benchmarks, published external
  artifacts, and a large corpus of published specifications, showing that
  satisfaction-only reporting hides distinctions that matter before a
  specification-guided policy is deployed.
\end{itemize}

Code and data for the audit layer and all reported experiments are publicly
available at~\cite{bacchiani2026artifact}.

The rest of the paper is organized as follows. \S\ref{sec:background}
fixes the decision-making model, the logic used to specify behavior over it,
and the construction that connects the two. \S\ref{sec:running-example}
works a concrete case by hand, so that the rest of the paper builds on
an example already understood.
\S\ref{sec:related-work} situates the audit layer against prior work on
temporal-logic-guided reinforcement learning, formal vacuity, and mutation
testing.
\S\ref{sec:methodology} maps the audit questions to reported regimes,
defines the exact diagnostics the audit layer computes, and proves the
training-influence validity screen.
\S\ref{sec:benchmarks} describes the benchmark and artifact corpus.
\S\ref{sec:results} reports the regimes observed across the corpus,
including the corrected training-influence intervention.
\S\ref{sec:practitioner-guidance} explains how practitioners should respond
to each regime. \S\ref{sec:discussion-limitations} examines threats to validity, and
\S\ref{sec:conclusion} concludes.
 
\section{Background}
\label{sec:background}

This section fixes vocabulary and notation used throughout the paper: the
decision-making model a policy is trained in, the logic used to write a
specification over it, and the construction that lets the two be evaluated
together.

\subsection{Markov Decision Processes}
\label{subsec:mdp-background}

A finite-horizon Markov decision process (MDP)~\cite{puterman1994mdp} is a
tuple $M=(S,A,P,r,H,d_0,L)$. $S$ is a finite set of states and $A$ is a
finite set of actions. $P(s'\mid s,a)$ is a transition probability
distribution over successor states given a state-action pair. $r$ is a
reward function, $H$ is a finite horizon, and $d_0$ is a distribution over
initial states. $L$ is a labeling function that maps each state to a set of
atomic propositions, $L(s)\subseteq AP$, drawn from a fixed vocabulary $AP$;
labels are the only channel through which anything outside the MDP, in
particular the temporal-logic specifications introduced in
\S\ref{subsec:ltlf-preliminaries}, can refer to what happens inside it. An
episode is a sequence $s_0,a_0,s_1,a_1,\ldots,s_H$ generated by sampling
$s_0\sim d_0$ and, at every step $t<H$, an action $a_t$ from a policy and a
successor $s_{t+1}\sim P(\cdot\mid s_t,a_t)$. A policy $\pi$ maps a state,
together with a time index whenever the horizon is finite, to a distribution
over actions. The standard optimization criterion is expected return,
\[
J(\pi)=\mathbb{E}_{\tau\sim\pi}\Big[\textstyle\sum_{t=0}^{H-1}
r(s_t,a_t,s_{t+1})\Big],
\]
the expectation being over episodes $\tau$ that $\pi$ generates in $M$. Every
MDP in this paper is undiscounted: because $H$ is already finite, the return
stays bounded without a discount factor, so this is written as $\gamma=1$
wherever a discount factor is referenced elsewhere in the paper. Every MDP in
this paper also has finite $S$ and $A$, so $J$ and the set of optimal
policies can be computed exactly by dynamic programming rather than only
estimated, a fact \S\ref{subsec:rl-background} relies on directly.

\subsection{Reinforcement Learning}
\label{subsec:rl-background}

Reinforcement learning (RL)~\cite{sutton2018reinforcement} is the family of
methods that learn a policy maximizing $J(\pi)$ from sampled episodes rather
than from a known $P$ and $r$. The central object most RL methods learn is an
action-value function. For a policy $\pi$, the value of taking action $a$ in
state $s$ at step $t$ and following $\pi$ afterward is
\begin{align*}
Q^{\pi}(s,a,t)={}&\mathbb{E}\Big[\textstyle\sum_{t'=t}^{H-1}
r(s_{t'},a_{t'},s_{t'+1})\;\Big|\\
&\quad s_t=s,\,a_t=a,\,\pi\Big],
\end{align*}
and the optimal action-value function $Q^{*}$ satisfies the Bellman
optimality equation
\begin{align*}
Q^{*}(s,a,t)={}&\mathbb{E}_{s'\sim P(\cdot\mid s,a)}\Big[r(s,a,s')\\
&+\max_{a'}Q^{*}(s',a',t{+}1)\Big],
\end{align*}
with $Q^{*}(s,a,H)=0$ at the terminal step; a policy that greedily selects
$\arg\max_a Q^{*}(s,a,t)$ at every state and step is optimal. Q-learning
\cite{watkins1992qlearning}, the default learner in this paper, estimates
$Q^{*}$ without ever being given $P$ or $r$: after observing a transition
$(s_t,a_t,r_t,s_{t+1})$, it updates a table entry toward the same target the
Bellman equation prescribes,
\begin{align*}
Q_t(s_t,a_t)\leftarrow{}&Q_t(s_t,a_t)\\
&+\alpha\Big(r_t+\max_{a'}Q_{t+1}(s_{t+1},a')-Q_t(s_t,a_t)\Big),
\end{align*}
where $\alpha$ is a learning rate. Actions during training are chosen
$\epsilon$-greedily, that is, greedily with respect to the current table with
probability $1-\epsilon$ and uniformly at random otherwise, with $\epsilon$
annealed over training so that early episodes explore and later episodes
exploit. Every environment and formula-product state space used in this
paper is finite, so the table $Q_t(s,a)$ from Q-learning and the exact
$Q^{*}(s,a,t)$ from dynamic programming (\S\ref{subsec:mdp-background}) are
defined over exactly the same indices; every learned policy reported in this
paper is scored against that exact optimum rather than against another
learned baseline, which is what makes an optimality gap of zero a checkable
fact rather than an estimate.

\subsection{Temporal Logic over Finite Traces}
\label{subsec:ltlf-preliminaries}

A specification in this paper is a formula of linear temporal logic over
finite traces (LTLf)~\cite{degiacomo2013ltlf}, chosen over infinite-trace
linear temporal logic (LTL) precisely because episodes in
\S\ref{subsec:mdp-background} end at a fixed horizon $H$ rather than running
forever. Formulae are generated by the grammar
\[
\varphi ::= p \mid \neg\varphi \mid \varphi\land\varphi \mid \varphi\lor\varphi
\mid \mathtt{X}\varphi \mid \mathtt{F}\varphi \mid \mathtt{G}\varphi \mid
\varphi\,\mathtt{U}\,\varphi,
\]
where $p$ ranges over the atomic propositions $AP$ from
\S\ref{subsec:mdp-background}. A finite trace $\tau=(l_0,\ldots,l_H)$ is a
sequence of proposition sets, $l_i\subseteq AP$, one per episode step; in
this paper $l_i=L(s_i)$ for the episode's states. Satisfaction of $\varphi$
at position $i$ of $\tau$, written $(\tau,i)\models\varphi$, is defined by
recursion on $\varphi$:
\begin{align*}
(\tau,i)\models p &\iff p\in l_i,\\
(\tau,i)\models\neg\psi &\iff (\tau,i)\not\models\psi,\\
(\tau,i)\models\mathtt{X}\psi &\iff i<H \text{ and } (\tau,i{+}1)\models\psi,\\
(\tau,i)\models\mathtt{F}\psi &\iff (\tau,j)\models\psi \text{ for some } j\in\{i,\ldots,H\},\\
(\tau,i)\models\mathtt{G}\psi &\iff (\tau,j)\models\psi \text{ for all } j\in\{i,\ldots,H\},\\
(\tau,i)\models\psi_1\mathtt{U}\psi_2 &\iff (\tau,j)\models\psi_2 \text{ for some } j\in\{i,\ldots,H\}\\
&\quad\text{with } (\tau,k)\models\psi_1 \text{ for all } i\le k<j,
\end{align*}
with $\land$ and $\lor$ read as ordinary conjunction and disjunction of the
two recursive truth values. A trace satisfies $\varphi$, written
$\tau\models\varphi$, exactly when $(\tau,0)\models\varphi$. The clause
$i<H$ in the semantics of $\mathtt{X}\psi$ is the one place finiteness bites:
this is strong next, so $\mathtt{X}\psi$ is false at the last position of
every trace, whereas in infinite-trace LTL there is always a next position
to defer to. $\mathtt{F}$ and $\mathtt{G}$ inherit the same boundary through
their own recursive definitions, since both quantify only over the finitely
many remaining positions $\{i,\ldots,H\}$ rather than an infinite suffix.

Satisfaction defined this way can hold for reasons that have nothing to do
with what a formula was written to enforce. The standard illustration is an
implication whose antecedent never becomes true:
$\mathtt{G}(\mathtt{req}\rightarrow\mathtt{F}\,\mathtt{grant})$ holds
trivially on any trace where $\mathtt{req}$ never occurs, regardless of
whether $\mathtt{grant}$ would ever have followed it. \S\ref{sec:methodology}
turns this intuition into a policy-level vacuity measurement.

\subsection{Connecting a Specification to a Policy}
\label{subsec:connection}

A specification can enter an RL pipeline in more than one way: as a
task description evaluated only at the end of training and kept separate
from the reward signal~\cite{littman2017gltl}, as the source of a sparse
terminal reward compiled from an automaton's acceptance
condition~\cite{icarte2018reward,icarte2022rewardmachines,camacho2019ltl}, as
a continuous shaping signal derived from automaton or monitor
distance~\cite{jiang2021shaping}, or as a monitor that sits beside the
learner without touching the reward it optimizes, either augmenting policy
state or restricting which actions it may
take~\cite{degiacomo2020restraining,alshiekh2018shielding}. All four uses
require the same underlying step: evaluating a formula against the trace an
episode generates, one step at a time, as that trace unfolds.

This paper evaluates formulae by progression. At step $t$, a product state
pairs the environment state $s_t$ with the residual formula that remains to
be satisfied after consuming the label (the set of true propositions) of
$s_t$. Progression rewrites the formula so that the residual after
consuming $s_0,\ldots,s_t$ is equivalent to the part of $\varphi$ still
owed by the unconsumed future of the trace; the residual reaching a fixed accepting
or rejecting form at step $H$ is what makes $\tau\models\varphi$ decidable
one step at a time, and what makes $\Psat(\pi,\varphi)$, the probability that
a trace generated by $\pi$ satisfies $\varphi$, a quantity this paper can
compute exactly rather than only estimate. In stochastic environments, the
successor-label convention matters: the monitor transition from step $t-1$
to step $t$ must consume the label of the sampled successor $s_t$, not the
label of the state being left, so that the same product construction is
valid regardless of whether $P$ is deterministic.

\section{Running Example: Same Outcome, Different Explanation}
\label{sec:running-example}

Fig.~\ref{fig:fl-grid} shows the smallest case to keep in mind throughout
the paper. The agent starts at $S$, must reach $G$, and must avoid the grey
holes. There are two highlighted paths. Both reach the goal in eight moves
without entering a hole, so a standard evaluation would report the same task
success for both. The difference is hidden in the labels on the path: the
red dashed path visits $\mathtt{act}$ and later visits $\mathtt{reset}$,
whereas the blue path avoids $\mathtt{act}$ altogether. Thus both paths can
satisfy the same specification, but they do so for different reasons.

The standard FrozenLake witness uses
\[
\varphi =
\mathtt{F}\,\mathtt{goal}\land
\mathtt{G}\,\neg\mathtt{hole}\land
\mathtt{G}\,(\mathtt{act}\rightarrow\mathtt{F}\mathtt{reset}).
\]
The first two conjuncts require eventually reaching the goal and always
avoiding holes. The third says that if the proposition $\mathtt{act}$ ever
fires, then $\mathtt{reset}$ must eventually follow. This clause can be
satisfied in two very different ways: a policy can trigger $\mathtt{act}$ and
then actually perform the reset, or it can avoid $\mathtt{act}$ entirely, so
the implication is true only because its antecedent never happens.

The audit question is therefore not whether the formula is satisfied. It is
what role the reset clause played in that satisfaction. The formal test used
later asks a simple counterfactual question: if $\mathtt{reset}$ were made
impossible, would the successful trace still satisfy the formula? For this
formula, making the positive occurrence of $\mathtt{reset}$ impossible turns
the reset clause into a requirement that $\mathtt{act}$ never happen:
\[
\mathrm{strengthen}(\varphi,\mathtt{reset})=
\mathtt{F}\,\mathtt{goal}\land
\mathtt{G}\,\neg\mathtt{hole}\land
\mathtt{G}\,\neg\mathtt{act}.
\]
If a successful path still satisfies this counterfactual formula, then
$\mathtt{reset}$ did no work on that path. If the counterfactual breaks
satisfaction, then $\mathtt{reset}$ was actually needed.

The blue path passes this counterfactual test: it never visits
$\mathtt{act}$, so the formula remains true even when $\mathtt{reset}$ is
impossible. The red path fails the counterfactual test: it does visit
$\mathtt{act}$, so its later $\mathtt{reset}$ visit is what discharges the
obligation. Both paths have
$\Psat=1$ and identical native return, but they give opposite explanations
for the same satisfaction report.

This is why the example is useful as a running case. The red path proves
that the environment does not force the reset clause to be irrelevant:
there exists an equally successful way to make it matter. If a learner
instead converges to the blue behavior, it has selected a way of satisfying
the same formula in which the reset obligation is never exercised. The rest
of the paper turns this distinction into an audit that can be applied across
learners, specifications, and benchmarks. 

\begin{figure}[t]
\centering
\begin{tikzpicture}[scale=0.56]
  \foreach \r in {0,...,4} {
    \foreach \c in {0,...,4} {
      \draw[gray!60] (\c,-\r) rectangle ++(1,-1);
    }
  }
  \foreach \r/\c in {1/1,1/3,2/3,3/0,3/4,4/0,4/1} {
    \fill[black!25] (\c,-\r) rectangle ++(1,-1);
  }
  \node[font=\bfseries] at (0.5,-0.5) {S};
  \node[font=\bfseries] at (4.5,-4.5) {G};
  \node[font=\small] at (2.5,-0.25) {act};
  \node[font=\small] at (3.5,-3.25) {reset};

  \draw[very thick, blue, -{Latex[length=2.2mm]}]
    (0.35,-0.65) -- (0.35,-1.65) -- (0.35,-2.35) --
    (1.35,-2.35) -- (1.35,-3.35) --
    (2.35,-3.35) -- (2.35,-4.35) --
    (3.35,-4.35) -- (4.35,-4.35);

  \draw[very thick, red, dashed, -{Latex[length=2.2mm]}]
    (0.65,-0.35) -- (1.65,-0.35) -- (2.35,-0.35) --
    (2.65,-1.35) -- (2.65,-2.35) -- (2.65,-3.35) --
    (3.65,-3.35) -- (3.65,-4.35) -- (4.65,-4.35);

  \draw[very thick, blue] (0,-6.0) -- ++(0.8,0);
  \node[anchor=west, font=\footnotesize] at (0.9,-6.0) {Route A (vacuous: never enters \emph{act})};
  \draw[very thick, red, dashed] (0,-6.6) -- ++(0.8,0);
  \node[anchor=west, font=\footnotesize] at (0.9,-6.6) {Route B (non-vacuous: \emph{act} at $t{=}2$, \emph{reset} at $t{=}6$)};
\end{tikzpicture}
\caption{FrozenLake running example. Two equally successful paths have
opposite \texttt{reset} vacuity.}
\label{fig:fl-grid}
\end{figure}

\section{Related Work}
\label{sec:related-work}

\subsection{Temporal-Logic-Guided RL}

Temporal logic has been used in RL as a task language, a reward source, a
shaping signal, and an external controller. Environment-independent task
specifications compile temporal goals separately from the reward
signal~\cite{littman2017gltl}. Reward machines and related automaton-based
methods compile non-Markovian tasks into automata whose acceptance condition
drives sparse rewards or decomposition~\cite{icarte2018reward,
icarte2022rewardmachines,camacho2019ltl}. Restraining bolts and
shielding place logical monitors beside the learner, either augmenting
policy state or blocking unsafe actions~\cite{degiacomo2020restraining,
alshiekh2018shielding}. Reward shaping can also be derived from automaton or
monitor distance~\cite{jiang2021shaping}, and logical structure can support
curricula or cross-task generalization~\cite{vaezipoor2021ltl2action,
jothimurugan2021compositional}.

These works primarily ask whether the mechanism improves satisfaction,
return, sample efficiency, or generalization. This paper asks a different
software-assurance question after those outcomes are measured: which clauses
of the specification were relevant to the learned policy, and which regime
explains the role they played?

\subsection{Vacuity, Causality, and Mutation}

Formal-methods vacuity detects specifications that hold although part of the
formula does no work~\cite{beer1997vacuity,kupferman2003vacuity}. Later work
handles repeated occurrences and mutual-vacuity interactions, where changing
one subformula changes the apparent role of another~\cite{armoni2003enhanced,
gurfinkel2004vacuous}. Those ideas are normally applied to a fixed model.
Here the model under audit is a learned policy distribution, so vacuity must
be measured on successful traces and then related back to the training
mechanism that produced them.

The mutation-testing analogy is direct. A test suite that fails to distinguish
a program from a mutant has not shown the mutated statement
matters~\cite{demillo1978hints,jia2011mutation}. Similarly, a policy that satisfies
a formula has not shown every formula occurrence mattered. The same analogy
also warns against invalid mutants: an ablation that collapses the formula or
is environment-equivalent to the original cannot support a causal claim.
The training-ablation screen and Theorem~\ref{thm:comparable-ablation} are
the specification-level version of that adequacy check.

The paper is also adjacent to specification mining and API-property
inference, which recover temporal or typestate-like properties from software
artifacts and executions~\cite{gabel2008symbolic,robillard2013api}. That
line of work asks how specifications can be discovered or checked against
program behavior. Our question is complementary: when a specification is
already used to guide an RL policy, did it actually explain the learned
behavior, or did satisfaction hide vacuity, invariance, tradeoff, or
learnability/stability splits?

Two other notions of causality are adjacent but distinct. Responsibility and
causality in verification ask which events or components cause a fixed
execution or model to satisfy a specification~\cite{chockler2008causes}; the
causal object here is different, whether changing the specification or its
integration mechanism changes the distribution of learned policies, not
whether a component caused a single fixed run to succeed. Post-hoc
explanation methods for RL ask a related but separate question: causal
explanation methods construct models of agent behaviour and answer
counterfactual questions about actions and outcomes~\cite{madumal2020causal},
and causal-influence methods ask where an agent's actions affect later state
variables, using that information to improve learning
efficiency~\cite{seitzer2021causal}. This paper does not explain an action
choice directly, and it does not use causal influence as an exploration
bonus; it asks whether the \emph{specification} itself explains the learned
policy's assurance-relevant behaviour.

\section{Audit Methodology}
\label{sec:methodology}

The audit layer takes an MDP, a finite-trace temporal formula, a learned
policy, and an audited subformula occurrence, and returns a regime label together with 
the measurements supporting it, building on the model, logic, and product
construction fixed in \S\ref{sec:background}.
The running example in \S\ref{sec:running-example} is the concrete instance
these quantities decode.

\subsection{Regime Map}
\label{sec:audit-regimes}

The introduction posed three audit questions: whether the policy satisfied
the formula, whether a clause mattered on successful traces, and what
explains that clause status. Table~\ref{tab:audit-regimes} names the regimes
this paper uses to answer them. The table is intentionally not ordered from
``good'' to ``bad''; it maps observations to assurance interpretation
instead: some rows explain failure before vacuity is even defined,
some explain the role of a clause on successful traces, and some reject an
explanation test as invalid before any training-influence conclusion is
drawn.

\begin{table*}[t]
\centering
\scriptsize
\setlength{\tabcolsep}{3pt}
\caption{Audit regimes separated by the proposed layer. The table defines
the regime vocabulary; \S\ref{sec:results} reports which benchmarks and
artifacts instantiate each regime.}
\label{tab:audit-regimes}
\begin{tabularx}{\textwidth}{@{}p{0.15\textwidth}p{0.29\textwidth}p{0.23\textwidth}X@{}}
\toprule
Regime & What satisfaction/return hide & What to measure & Interpretation \\
\midrule
Matched excess & Matched policies can satisfy and perform equally while using
different subformula roles & $\Vpsi$, $\GrefLoo$, matched-set size &
Mechanism affects conditional
vacuity among matched alternatives \\
Invariance & Every mechanism gives the same clause role &
Environmental floor, uniform $\Vpsi$ &
Clause role is fixed by topology,
physics, or zero-cost rerouting \\
Attenuation & Raw vacuity exists but is not mechanism-predictive &
Raw $\Vpsi$ distribution, $\GrefLoo$ intervals & The phenomenon is real, but
not explained by integration mechanism \\
Reward/satisfaction tradeoff & Highest native return and highest
satisfaction are not matched & $\Psat$--return Pareto mismatch &
Matched vacuity comparison is ineligible; report the
tradeoff instead \\
Learnability/stability split & A satisfying policy exists, but some
mechanisms fail or need much larger budgets to find it & Convergence,
$\Psat$, episode budget & Mechanism determines
whether the specification is learned, not merely how vacuously it is satisfied \\
Specification-ablation hazard & A proposed mutation does not define a valid
training intervention & Collapse, collision, equivalence screens &
Do not infer training influence from invalid weakening alone \\
\bottomrule
\end{tabularx}
\end{table*}

\subsection{Conditional Vacuity and Excess}
\label{subsec:vacuity-and-excess}

For each occurrence $\psi$ in $\varphi$, the parser records polarity and
position. The polarity determines two mutants: a weakening mutant, which
substitutes the constant that can only make $\varphi$ easier to satisfy, and
a strengthening mutant, which substitutes the dual constant. Evaluation
relevance is defined by the strengthening mutant:
\[
I_{\mathrm{eval}}(\psi,\tau)=
\mathbb{1}\big[\tau\models\varphi
\land \tau\not\models\mathrm{strengthen}(\varphi,\psi)\big].
\]
The conditional vacuity score is
\[
\Vpsi(\pi)=1-
\Pr[I_{\mathrm{eval}}(\psi,\tau)=1\mid \tau\models\varphi].
\]
It is undefined when $\Psat(\pi,\varphi)=0$, because there are no successful
traces on which to ask whether the occurrence mattered. Root-necessary
occurrences are flagged as sanity checks rather than ordinary evidence.
$\Vpsi$ alone cannot say whether a vacuous policy had any alternative; two
further statistics ask that question against different baselines.

\paragraph{Environmental floor} The environmental floor $V_\psi^{\min}(\eta)$
asks how low vacuity can be among all policies whose satisfaction probability
is at least $\eta$. It is computed by a root-necessity shortcut,
deterministic action-plan enumeration, or a product-MDP occupancy LP solved
by bisection (binary search over the candidate vacuity threshold, each step
checking one linear program for feasibility). The LP uses standard
finite-horizon flow constraints. This gives the floor statistic
\[
\Gmin(\pi)=\Vpsi(\pi)-V_\psi^{\min}(\eta),
\]
which separates learner-selected vacuity from vacuity forced by the
environment.

\paragraph{Matched-reference excess} When policies are comparable in both
satisfaction and native task return, the matched-reference statistic asks
whether a learned policy is more or less vacuous than similarly performing
alternatives:
\[
\GrefLoo(\pi)=\Vpsi(\pi)-
\mathbb{E}_{\pi'\in \mathcal{B}_{\eta,\epsilon}\setminus\{\pi\}}
[\Vpsi(\pi')].
\]
The set $\mathcal{B}_{\eta,\epsilon}$ contains learned policies within
$0.02$ satisfaction probability and within
$0.05(J_{\max}-J_{\min})$ native return of $\pi$. The leave-one-out form is
reported to avoid damping the statistic by including a policy in its own
reference mean.

\subsection{Training-Influence Validity Screen}
\label{subsec:validity-screen}

If the audit asks whether a clause influenced training, the natural
experiment is to ablate the clause, retrain, and compare policies. This is
exactly the mutation-testing instinct, and it is exactly as dangerous: many
specification mutations do not define a valid causal intervention. Before
any ablation-based experiment in this paper is trusted, the implementation
checks whether the mutant collapses to a constant, collides with another
occurrence's mutant, or is environment-relative equivalent to the original
formula. \S\ref{sec:results} reports how often this syntactic screen alone
rules out a candidate, both in this paper's own benchmarks and in a large
corpus of published specifications.

Passing the syntactic screen is not enough. A weakening can be syntactically
clean, that is, it neither collapses, nor collides, nor reduces to an
environment-relative equivalence, and still fail to isolate the clause's
causal effect. The reason is structural rather than syntactic: the reward
mechanisms used here reward satisfaction itself, not any particular way of
achieving it, so whenever the environment already admits a policy that
satisfies the stronger, unweakened formula with certainty, that same policy
also satisfies the weaker one, and training on either objective can converge
to it. An observed difference between the two training runs would then
reflect which policy the learner happened to settle on, not a causal effect
of the ablation. The theorem below makes this precise.

\begin{definition}[Entailment-related ablation pair]
Two finite-trace specifications $\varphi_{\mathrm{strong}}$ and
$\varphi_{\mathrm{weak}}$ form an entailment-related ablation pair if every
trace satisfying $\varphi_{\mathrm{strong}}$ also satisfies
$\varphi_{\mathrm{weak}}$.
\end{definition}

\begin{definition}[Sparse acceptance objective]
\label{def:sparse-acceptance}
For a specification $\psi$, sparse acceptance pays a positive terminal reward
exactly when the generated trace satisfies $\psi$. Thus maximizing sparse
acceptance is equivalent to maximizing $\Psat(\cdot,\psi)$.
\end{definition}

\begin{lemma}[Potential-based acceptance-distance shaping preserves the optimal set]
\label{lem:shaping-invariance}
The specific potential-based acceptance-distance shaping used in this paper
has the same optimal policy set as sparse acceptance for the same specification.
This is not a claim about arbitrary automaton-distance rewards or other
non-potential shaping schemes.
\end{lemma}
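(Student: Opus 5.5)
The proof reduces the lemma to the classical potential-based shaping result of Ng, Harada and Russell, specialised to the undiscounted finite-horizon product MDP of \S\ref{subsec:connection}.

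\emph{Fixing the setting.} Work on the product MDP whose states are pairs $(s_t,\xi_t)$ of environment state and residual formula, augmented with the time index $t$. Sparse acceptance (Definition~\ref{def:sparse-acceptance}) pays $R$ on the final transition exactly when the residual at step $H$ is accepting, and $0$ otherwise. The shaping used in this paper adds, on each product transition $(x_t,a_t,x_{t+1})$, the term
\[
F(x_t,x_{t+1})=\Phi(x_{t+1})-\Phi(x_t)
\]
with $\gamma=1$. Here $\Phi(s,\xi,t)$ is a bounded function of the product state, namely the negated acceptance distance of the residual. The first step is to state this form explicitly and check that the implementation's reward has it. In particular, $\Phi$ may depend only on the current product state, never on the action or on the trace history beyond what the residual records.

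\emph{Telescoping.} For any trajectory, the shaped return is
\[
\sum_{t=0}^{H-1}\big(r_{\mathrm{acc}}+\Phi(x_{t+1})-\Phi(x_t)\big)
= R\cdot\mathbb{1}[\tau\models\psi]+\Phi(x_H)-\Phi(x_0).
\]
To make the terminal term harmless, adopt the convention $\Phi(x_H)=0$ at step $H$. If the paper instead keeps a nonzero terminal potential that depends only on acceptance, this term merely rescales $R$; the requirement is that it stay positive, so the ordering of policies is unchanged. The term $\Phi(x_0)$ depends only on $s_0\sim d_0$ and the initial residual, neither of which the policy controls. Taking expectations gives
\[
J_{\mathrm{shaped}}(\pi)=R\cdot\Psat(\pi,\psi)-\mathbb{E}_{d_0}[\Phi(x_0)],
\]
which is a positive affine transform of the sparse objective $J_{\mathrm{sparse}}(\pi)=R\cdot\Psat(\pi,\psi)$. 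The maximizers are therefore identical, and because the identity holds for every policy the result covers the whole optimal set, not just one optimum.

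If the optimal set is meant state-wise via Bellman greedy actions, the same telescoping applied from an arbitrary $(x,t)$ gives
\[
Q^*_{\mathrm{shaped}}(x,a,t)=Q^*_{\mathrm{sparse}}(x,a,t)-\Phi(x,t),
\]
so the argmax actions coincide at every state and step.

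\emph{Main obstacle.} The difficulty is not the algebra but the boundary conventions. Two points need care. First, the terminal potential must be handled as above, since many acceptance-distance shapings do not vanish at step $H$. Second, the potential must be defined on the product state using the successor-label convention of \S\ref{subsec:connection}, so that it is a genuine function of $x_{t+1}$ in stochastic environments and the telescoping is exact rather than only holding in expectation. I would settle both by appeal to the construction in the released artifact, and I would note explicitly why non-potential automaton-distance rewards fall outside the claim: they break the telescoping.
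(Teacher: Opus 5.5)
Your proposal is correct and is essentially the paper's argument: you telescope the potential differences down to $\Phi(H,\cdot)-\Phi(0,\cdot)$, note that the initial term is policy-independent, and use the fact that the terminal potential depends only on acceptance to obtain a strictly increasing affine function of $\Psat(\pi,\psi)$. The only difference is that the paper works directly with terminal values $a$ (accepting) and $b<a$ (rejecting), which is your fallback case, rather than normalizing $\Phi(\cdot,H)=0$; it also does not use your state-wise $Q^*$ identity.
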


\begin{proof}
The shaped reward is sparse acceptance plus
\[
\sum_{t=0}^{H-1}\big(\Phi(t{+}1,s_{t+1})-\Phi(t,s_t)\big).
\]
All intermediate terms cancel, leaving only
$\Phi(H,s_H)-\Phi(0,s_0)$. The initial term is policy-independent because
$s_0$ is drawn from the fixed initial distribution. The terminal potential
has one value $a$ on accepting terminals and a lower value $b$ on rejecting
terminals. Therefore, for any policy $\pi$,
\[
J_{\mathrm{shaped}}^\psi(\pi)
=
c + k\,\Psat(\pi,\psi),
\]
where $c$ collects the policy-independent initial and rejecting-terminal
terms, and $k>0$ combines the sparse terminal reward with the shaping
difference $a-b$. Thus shaped return is a strictly increasing affine
transformation of satisfaction probability, so it has the same maximizers as
sparse acceptance.
\end{proof}

\begin{theorem}[Comparable-ablation limitation]
\label{thm:comparable-ablation}
Let $\varphi_{\mathrm{strong}}\Rightarrow\varphi_{\mathrm{weak}}$ be an
entailment-related pair. If some policy satisfies
$\varphi_{\mathrm{strong}}$ with probability $1$, then that policy is
simultaneously optimal for sparse acceptance and for acceptance-distance
shaping under both $\varphi_{\mathrm{strong}}$ and
$\varphi_{\mathrm{weak}}$.
\end{theorem}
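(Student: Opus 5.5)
The plan is to reduce everything to satisfaction probabilities, using Definition~\ref{def:sparse-acceptance} for the sparse objectives and Lemma~\ref{lem:shaping-invariance} for the shaped ones. Then the theorem becomes a statement about which policies maximize $\Psat(\cdot,\varphi_{\mathrm{strong}})$ and $\Psat(\cdot,\varphi_{\mathrm{weak}})$. Let $\pi^\star$ be the assumed policy with $\Psat(\pi^\star,\varphi_{\mathrm{strong}})=1$. There are four objectives to handle: sparse and shaped, each under the strong and the weak formula.

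First, the strong sparse objective. By Definition~\ref{def:sparse-acceptance}, maximizing sparse acceptance for $\varphi_{\mathrm{strong}}$ is equivalent to maximizing $\Psat(\cdot,\varphi_{\mathrm{strong}})$. Since every policy has $\Psat\le 1$ and $\pi^\star$ attains $1$, $\pi^\star$ is a maximizer. I will make the equivalence explicit, writing the expected sparse return as $R\cdot\Psat(\pi,\psi)$ with $R>0$ the terminal reward, so optimality transfers without further argument.

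Second, the weak sparse objective. Here I use entailment at the level of traces. The event $\{\tau\models\varphi_{\mathrm{strong}}\}$ is contained in $\{\tau\models\varphi_{\mathrm{weak}}\}$, so under the trace distribution induced by $\pi^\star$ (with $s_0\sim d_0$ and the successor-label convention), monotonicity of probability gives
\[
1=\Psat(\pi^\star,\varphi_{\mathrm{strong}})\le\Psat(\pi^\star,\varphi_{\mathrm{weak}})\le 1 .
\]
Hence $\pi^\star$ also maximizes $\Psat(\cdot,\varphi_{\mathrm{weak}})$, and so it is sparse-optimal for $\varphi_{\mathrm{weak}}$.

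Third, the two shaped objectives. Lemma~\ref{lem:shaping-invariance} gives, for each $\psi\in\{\varphi_{\mathrm{strong}},\varphi_{\mathrm{weak}}\}$, an identity $J^\psi_{\mathrm{shaped}}(\pi)=c_\psi+k_\psi\Psat(\pi,\psi)$ with $k_\psi>0$. Equivalently, it says the shaped and sparse objectives have the same optimal set. Either form transfers the previous two steps directly, so $\pi^\star$ lies in all four optimal sets.

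The main obstacle is not the inequality but making sure the quantities are well defined, which comes down to two checks. The first is that "optimal" is meant over the same policy class (time-indexed Markov policies on the product) for all four objectives. Since $\pi^\star$ attains the global upper bound $1$ over all policies, even history-dependent ones, this check is harmless. The second is that the constants $c_\psi$ in Lemma~\ref{lem:shaping-invariance} really are policy-independent for each formula separately. This holds because the lemma is applied to each specification with its own potential, and I never need to compare returns across the two formulas, only optimal sets. I will finish with a remark that the theorem asserts a shared optimum, not uniqueness. That is exactly why a learner trained on either objective may converge to $\pi^\star$, so an observed difference can reflect tie-breaking among optima rather than a causal effect of the ablation.
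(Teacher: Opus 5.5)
Your proposal is correct and follows the paper's proof essentially step for step. The attained upper bound $\Psat(\pi^\star,\varphi_{\mathrm{strong}})=1$ and trace-level entailment give $\Psat(\pi^\star,\varphi_{\mathrm{weak}})=1$, Definition~\ref{def:sparse-acceptance} yields sparse optimality, and Lemma~\ref{lem:shaping-invariance} transfers it to both shaped objectives. Your extra checks, which the paper leaves implicit, are also sound: measuring optimality against the bound $1$ over all history-dependent policies sidesteps the fact that the strong and weak products have different state spaces, and applying the lemma separately to each formula with its own constants $c_\psi,k_\psi$ is exactly what is needed.
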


\begin{proof}
Entailment gives satisfaction of $\varphi_{\mathrm{weak}}$ with probability
$1$ whenever $\varphi_{\mathrm{strong}}$ is satisfied with probability $1$.
Sparse acceptance is maximized by any policy with satisfaction probability
$1$. Lemma~\ref{lem:shaping-invariance} transfers the same optimality to the
shaped objective.
\end{proof}

The theorem does not say that specifications never influence training. It
says that comparable weaker/stronger ablations can share perfect optima, so
an observed policy difference between the two objectives may be only
tie-breaking within a shared optimal set, not evidence of a causal effect.
Whether training influence is empirically detectable once the intervention
is valid, rather than merely comparable, is an empirical question this paper
answers directly in \S\ref{sec:results}.

\subsection{Learners}

Of the four ways a specification can enter training
(\S\ref{subsec:connection}), three define a reward signal a learner can
optimize; the fourth does not define a reward configuration by itself. If it
restricts actions or changes the controller, it changes the closed-loop
system rather than the scalar reward, so this paper treats it as part of the
benchmark or artifact rather than as a fourth reward mechanism. The three
reward configurations run throughout this paper are: native task-only
reward, which is exactly $r$ from the underlying MDP
(\S\ref{subsec:mdp-background}) with $\varphi$ never consulted by the reward
function; the sparse
acceptance objective (Definition~\ref{def:sparse-acceptance}); and
acceptance-distance shaping, which adds the potential-based term of
Lemma~\ref{lem:shaping-invariance} over monitor residuals without changing
the optimal set it shapes.

All three are trained over the same object: the formula-product state space
of \S\ref{subsec:connection}, pairing each environment state with its
residual formula, so that satisfaction, vacuity, and the environmental floor
are properties of the same finite state space the learner acts on, not of a
separate evaluation harness applied afterward. Task-only training uses the
same product-state implementation for comparability, but its reward ignores
the residual component.
Q-learning (\S\ref{subsec:rl-background}) is the default learner, run under a
fixed schedule of increasing episode budgets per cell; training stops at the
first budget where the learned $Q_t(s,a)$ matches the dynamic-programming
optimum $Q^{*}(s,a,t)$ to within a fixed numerical tolerance on the
optimality gap, and is reported as non-convergent if the largest budget in
the schedule is exhausted first. SARSA, the on-policy counterpart that
bootstraps from the action actually taken under the current policy rather
than the greedy maximum, is used as a cross-algorithm check on the main
matched-excess cell, so that the reported mechanism ranking cannot be
dismissed as an artifact of Q-learning's specific bootstrapping rule.

\section{Benchmark and Artifact Corpus}
\label{sec:benchmarks}

This section describes the corpus on which the audit is run; the regime
assignments and numeric evidence are reported in \S\ref{sec:results}. The
corpus combines standard RL benchmarks with exact finite-MDP ports of
published artifacts. When a source artifact provides dynamics but no temporal
formula, the formula is an audit query over the published task.
Table~\ref{tab:benchmark-stats} lists the implemented examples,
Table~\ref{tab:audit-formulas} gives the audit formulae used for learner
experiments, and Table~\ref{tab:training-protocol} summarizes the common
training protocol.

\begin{table}[t]
\centering
\scriptsize
\setlength{\tabcolsep}{2.2pt}
\caption{Implemented corpus. ``Dyn.'' abbreviates dynamics: det. =
deterministic, stoch. = stochastic, and -- = specification-only audit.}
\label{tab:benchmark-stats}
\begin{tabularx}{\linewidth}{@{}p{0.24\linewidth}p{0.15\linewidth}p{0.13\linewidth}X@{}}
\toprule
Example & Source & Dyn. & Main use \\
\midrule
FrozenLake & standard & det. & reset obligation; matched excess \\
Taxi & standard & det. & maintenance visit; invariance \\
OfficeWorld & standard & det. & delivery and hazard clauses;
invariance/attenuation \\
Shield Water Tank & published & stoch. & boundary switching;
invariance/tradeoff \\
Shield SGW9 & published & det. & ordered colors with bombs; learnability
split \\
Shield SGW15 & published & det. & ordered colors with moving enemy;
learnability split \\
Absent Supervisor & published & stoch. & supervisor-dependent punishment;
learnability split \\
Safe Interruptibility & published & stoch. & interruption button; tradeoff \\
Whisky-and-Gold & published & stoch. & whisky-induced instability; stability
split \\
DeepLTL formulae & published specs & -- & ablation-hazard audit \\
Reward Machines tasks & published specs & -- & ablation-hazard audit \\
\bottomrule
\end{tabularx}
\end{table}

\begin{table*}[t]
\centering
\scriptsize
\setlength{\tabcolsep}{3pt}
\caption{Audit formulae used for learner experiments. The DeepLTL/RM row is
specification-only: all $173$ formulae are parsed from the published files and
reported verbatim in the artifact output.}
\label{tab:audit-formulas}
\begin{tabularx}{\textwidth}{@{}p{0.20\textwidth}X@{}}
\toprule
Example & Audit formula \\
\midrule
FrozenLake &
$\mathtt{F}\,\mathtt{goal}\land\mathtt{G}\,\neg\mathtt{hole}\land
\mathtt{G}\,(\mathtt{act}\!\rightarrow\!\mathtt{F}\,\mathtt{reset})$ \\
Taxi &
$\mathtt{F}\,\mathtt{dropoff\_success}\land
\mathtt{F}\,\mathtt{pickup\_success}\land
\mathtt{G}\,(\mathtt{maintenance\_flag}\!\rightarrow\!
\mathtt{F}\,\mathtt{visit\_R})$ \\
OfficeWorld &
$(\mathtt{F}\,(\mathtt{coffee\_event}\land
\mathtt{F}\,(\mathtt{mail\_event}\land\mathtt{F}\,\mathtt{office\_event}))
\lor
\mathtt{F}\,(\mathtt{mail\_event}\land
\mathtt{F}\,(\mathtt{coffee\_event}\land\mathtt{F}\,\mathtt{office\_event})))
\land\mathtt{G}\,\neg\mathtt{decoration}$ \\
Shield Water Tank &
$\mathtt{G}\,\neg\mathtt{underflow}\land\mathtt{G}\,\neg\mathtt{overflow}
\land\mathtt{G}\,\neg\mathtt{dwell\_violation}\land
\mathtt{G}\,((\mathtt{open\_mode}\land\mathtt{X}\,\mathtt{closed\_mode})
\!\rightarrow\!
(\mathtt{X}\,\mathtt{X}\,\mathtt{closed\_mode}\land
\mathtt{X}\,\mathtt{X}\,\mathtt{X}\,\mathtt{closed\_mode}))\land
\mathtt{G}\,((\mathtt{closed\_mode}\land\mathtt{X}\,\mathtt{open\_mode})
\!\rightarrow\!
(\mathtt{X}\,\mathtt{X}\,\mathtt{open\_mode}\land
\mathtt{X}\,\mathtt{X}\,\mathtt{X}\,\mathtt{open\_mode}))$ \\
Shield SGW9 &
$\mathtt{F}\,\mathtt{visit\_all\_colors}\land
\mathtt{G}\,\neg\mathtt{wall\_collision}\land
\mathtt{G}\,(\mathtt{bomb\_streak\_3}\!\rightarrow\!
\mathtt{X}\,\neg\mathtt{bomb\_contact})$ \\
Shield SGW15 &
$\mathtt{F}\,\mathtt{visit\_all\_colors}\land
\mathtt{G}\,\neg\mathtt{wall\_collision}\land
\mathtt{G}\,\neg\mathtt{enemy\_collision}\land
\mathtt{G}\,(\mathtt{enemy\_near}\!\rightarrow\!
\mathtt{F}\,\mathtt{enemy\_far})$ \\
Absent Supervisor &
$\mathtt{F}\,\mathtt{goal}\land
\mathtt{G}\,(\mathtt{supervisor\_absent}\!\rightarrow\!
\neg\mathtt{punishment})$ \\
Safe Interruptibility &
$\mathtt{F}\,\mathtt{goal}\land\mathtt{G}\,\neg\mathtt{button\_pressed}$ \\
Whisky-and-Gold &
$\mathtt{F}\,\mathtt{goal}\land\mathtt{G}\,\neg\mathtt{whisky}$ \\
DeepLTL/RM specs &
$149$ DeepLTL formulae and $24$ Reward Machines task formulae, parsed from
published specification files. \\
\bottomrule
\end{tabularx}
\end{table*}

\begin{table*}[t]
\centering
\scriptsize
\setlength{\tabcolsep}{4pt}
\caption{Common learner protocol. Each run uses the smallest budget in its
ladder that reaches exact product-MDP convergence; otherwise it reports the
largest attempted budget and the remaining optimality gap.}
\label{tab:training-protocol}
\begin{tabularx}{\textwidth}{@{}p{0.20\textwidth}X@{}}
\toprule
Item & Setting \\
\midrule
Learner & finite-horizon tabular Q-learning; SARSA is used as a robustness
check on the FrozenLake matched-excess cell \\
Reward mechanisms & native task reward only; sparse acceptance reward;
acceptance-distance shaping \\
Common hyperparameters & $\alpha=0.4$, $\gamma=1$, $\epsilon$ linearly
decays from $0.3$ to $0.02$; convergence tolerance $10^{-6}$ \\
Main seed count & $n=50$ for retained learner rows, except explicitly marked checks \\
Base ladders & task: $10^3,3{\times}10^3,10^4$; sparse:
$3{\times}10^3,10^4,2.5{\times}10^4,5{\times}10^4$; shaped:
$500,10^3,3{\times}10^3$ episodes \\
Stress ladders & ported stress checks extend the same protocol up to
$10^6$ episodes for Shield-grid task-only runs and $2{\times}10^6$ for
Shield-grid sparse-acceptance runs \\
\bottomrule
\end{tabularx}
\end{table*}

\paragraph{Standard Benchmarks}
FrozenLake, Taxi, and OfficeWorld are familiar finite gridworld families. In
FrozenLake, the agent must reach the goal while staying out of holes. One
route contains an optional activation square, creating a reset obligation
that a policy taking that route can either discharge on the way to the goal
or leave untouched. It is trained
for $50$ seeds per mechanism and is also used for reward-scale, SARSA, and
hyperparameter robustness checks. Taxi keeps the familiar
pickup-and-dropoff task, but adds a maintenance story: entering a
maintenance-warning situation creates an obligation to revisit the
maintenance location, and the audit tracks whether that revisit actually
happens before the episode ends. This diagnostic uses a 22-step horizon
and five seeds per mechanism. OfficeWorld keeps the coffee, mail, and office
delivery task. The agent may collect coffee and mail in either order, so one
variant treats the unused collection order as the audited clause: a policy
that always resolves the choice the same way leaves the other order's
obligation dormant. A second variant tests whether avoiding a decoration
hazard is a real constraint, or something the map lets every policy bypass
for free.

\paragraph{Shield-RL Ports}
Three Shield-RL examples~\cite{alshiekh2018shielding} were ported as exact
finite MDPs while preserving published maps, rewards, and dynamics. Water
Tank uses the published stochastic controller dynamics. The safety story is
that the water level should stay within bounds and the controller should not
switch modes too briefly; after opening or closing, it should remain in the
new mode long enough for the switch to be meaningful. We train sparse and
shaped monitor learners for $50$ seeds on each switching case and compare
them with the exact native-reward optimum, which instead trades native
return for satisfaction, the reward/satisfaction tradeoff this cell is
reported under. Shield SGW9 keeps the published 9x9
ordered-color task. The agent must visit the required colors while avoiding
walls, and the audit measures whether bomb contact is genuinely avoided after
a dangerous streak. Shield SGW15 keeps the published 15x9 moving-obstacle
task: the agent again follows an ordered-color objective, now while managing
a moving enemy, and here the question is whether a close approach to the
enemy is later corrected by moving away. For both Shield grids, real
Q-learning is run with the three reward mechanisms; task-only is pushed to
$10^6$ episodes, sparse acceptance to $2{\times}10^6$, and shaping uses the
base ladder because it converges there.

\paragraph{AI Safety Gridworlds Ports}
Three DeepMind AI Safety Gridworlds~\cite{leike2017aisafetygridworlds} were
ported with their published maps, rewards, supervisor/button semantics, and
stochastic effects. Absent Supervisor is designed so the interesting case is
exactly when the supervisor is absent: a compliant policy must still avoid
punishment rather than merely behave well under observation. Safe
Interruptibility rewards reaching the goal without ever disabling the
interruption mechanism a supervisor would use to stop the agent.
Whisky-and-Gold rewards reaching gold, but taking whisky along the way makes
every later action unreliable. Each AI Safety Gridworld is trained for $50$
seeds per mechanism with the base episode ladders in
Table~\ref{tab:training-protocol}; Absent Supervisor and Whisky-and-Gold also
include larger-budget checks to separate persistent learnability or
stability splits from ordinary budget effects.

\paragraph{External Specification Audit}
The specification-level audit parses 149 DeepLTL formulae and 24 Reward
Machines task machines~\cite{jackermeier2025deepltl,icarte2018reward}. No
learners are trained for this corpus. Its role is to test whether the
structural hazards behind default specification ablations already occur in
published TL/RL task artifacts.

\section{Results}
\label{sec:results}

This section asks what remains hidden after a temporal-logic-guided learner
reports satisfaction probability and task return. The answer is not a single
score. Two policies can have the same headline success while relying on
different parts of the specification; a clause can look satisfied only because
the learned policy never enters the situation where it matters; a native
reward can pull the agent away from the specification; and a reward mechanism
can fail before vacuity is even defined, because it never learns a satisfying
policy.

For that reason, the results are organized by audit regime rather than by
benchmark. Table~\ref{tab:example-audit-table} gives the whole map: each row
states which hidden situation the audit found, and the subsections then explain
the main regimes. The central message is practical: satisfaction
and return say whether a run succeeded, but not what role the specification
played in producing that success.

{\renewcommand{\arraystretch}{0.78}\begin{table*}[t]
\centering
\scriptsize
\setlength{\tabcolsep}{2.6pt}
\caption{All examples used by the audit. The table is intentionally
heterogeneous: satisfaction probability and return hide different assurance
regimes, so each row reports which regime the audit identified rather than
forcing every example into a matched-excess statistic. Bracketed numeric
intervals are 95\% bootstrap confidence intervals for the mean
$\GrefLoo$ matched-reference excess statistic.}
\label{tab:example-audit-table}
\begin{tabularx}{\textwidth}{@{}p{0.19\textwidth}p{0.20\textwidth}X@{}}
\toprule
Example & Regime & Main observation \\
\midrule
\texttt{FL-avoidable-det} & matched excess &
Matched policies have identical $\Psat$ and task return but different reset
vacuity; $\GrefLoo$ sparse $+0.237$ $[+0.145,+0.325]$, task $-0.073$
$[-0.190,+0.033]$, shaped $-0.165$ $[-0.270,-0.060]$ \\
\texttt{Taxi-h22} & invariance &
All mechanisms avoid the maintenance trigger; $\mathtt{visit\_R}$ is uniformly
vacuous \\
\texttt{OW-hazard} & invariance &
All mechanisms neutralize the decoration clause by zero-cost rerouting \\
\texttt{OW-symmetric} & attenuation &
All $300$ policies are matched exactly ($\Psat=1$, same return); raw vacuity
appears in roughly half the policies, but mechanism intervals cross zero \\
Shield Water Tank & invariance; tradeoff &
Sparse/shaped runs reach near-unit $\Psat$ with $\Vpsi=0$ for audited switching
clauses; the exact task-only optimum trades satisfaction for native return \\
Shield SGW9 & learnability split &
Task-only and sparse acceptance obtain a satisfying policy in $0/28$ seeds,
including checks up to $10^6$ and $2{\times}10^6$ episodes respectively;
shaping learns one in $28/28$ \\
Shield SGW15 & learnability split &
Task-only and sparse acceptance obtain a satisfying policy in $0/11$ seeds,
including checks up to $10^6$ and $2{\times}10^6$ episodes respectively;
shaping learns one in $11/11$ \\
AI Safety Absent Supervisor & learnability/satisfaction split &
Task/sparse remain at $\Psat=0.5$ in $50/50$ seeds; shaped reaches
$\Psat=1$ in $50/50$ \\
AI Safety Safe Interruptibility & reward/satisfaction tradeoff &
Task takes the interruption-disabling button in $47/50$ seeds ($\Psat=0$);
sparse/shaped avoid it in $50/50$ ($\Psat=0.5$, the stochastic maximum) \\
AI Safety Whisky-Gold & stability split &
The exact task optimum is safe, but the base task-only run reaches
$\Psat=1$ in only $27/50$ seeds; sparse/shaped reach it in $50/50$; a
larger-budget task-only check recovers \\
DeepLTL/RM specs & ablation hazard &
Across $173$ published specifications and $4{,}526$ occurrences, $3{,}482$
are collision-flagged, $1{,}297$ root-necessary, and only $631$ survive the
syntactic screen \\
\bottomrule
\end{tabularx}
\end{table*}
}

\subsection{Matched Excess}

Matched excess is the cleanest case for the audit: policies are equally good
by the usual reported quantities, yet differ in how much of the specification
they actually make relevant. The FrozenLake reset example is the retained
witness. All three mechanisms learn policies with the same task return, and
the satisfying policies reach the same headline outcome: they get to the goal
without falling into a hole. The difference is in the optional activation
route. Some successful policies avoid making the reset requirement matter;
others take behavior for which the reset clause is genuinely exercised.

The environmental floor confirms that this is not forced by the map:
$V_\psi^{\min}=0$ for \texttt{reset}, so a satisfying, non-vacuous route exists
with the same task return. Against the matched reference pool, sparse
acceptance makes reset vacuity systematically higher
($44/50$ positive leave-one-out excess values; mean $\GrefLoo=+0.237$,
95\% CI $[+0.145,+0.325]$). Acceptance-distance shaping goes the other way:
its mean excess is negative ($-0.165$, CI $[-0.270,-0.060]$), indicating that
the learner tends to make the reset clause more behaviorally relevant
than comparable alternatives. Task-only is unstable around zero
($-0.073$, CI $[-0.190,+0.033]$; one seed has undefined conditional vacuity
because it is not satisfying).

Even when satisfaction probability and task return match, the training mechanism can change what
role a specification clause played in producing the policy.

\subsection{Invariance}

Invariance is the case where the audit finds no mechanism-dependent story
because the environment already fixes the role of the clause. This is still a
result. It says that a satisfied specification should not be explained by the
training mechanism when topology, physics, or task structure would have made
the same clause vacuous or non-vacuous anyway.

Taxi gives the simplest vacuity example. All three mechanisms learn the
pickup/dropoff task at horizon 22, and all avoid entering the maintenance
condition. The policy is therefore satisfying, but the required maintenance
return is never tested: \texttt{visit\_R} is vacuous in every learned policy.
OfficeWorld's hazard variant has the same shape for a different reason. The
map contains a free route that avoids the decoration cells, so every mechanism
neutralizes the avoid-decoration clause without needing a distinctive
temporal-logic strategy.

Shield-RL Water Tank gives the complementary non-vacuous invariant case.
For both audited switching clauses, sparse and shaped monitor training reach
near-unit satisfaction and the clauses are exercised non-vacuously
($\Vpsi=0$). Here the controller dynamics themselves force meaningful mode
switching; the monitor can certify the behavior, but the audit finds no
matched-excess difference between the two monitor-based mechanisms. The
task-only optimum is handled separately as a reward/satisfaction tradeoff.

\subsection{Attenuation}

Attenuation is the middle case between matched excess and invariance. The
audited behavior is not absent: policies really do split between vacuous and
non-vacuous use of the clause. What disappears is the evidence that this
split is explained by the integration mechanism.

OfficeWorld's symmetric variant gives the example. All $300$ learned policies
(three mechanisms, $50$ seeds each, two branch occurrences) converge to
$\Psat=1$ and the same native return. The policies still choose different
orders for collecting coffee and mail, so each branch occurrence is vacuous
in roughly half of the runs. But this branch choice is shared across
mechanisms: the bootstrap intervals for $\GrefLoo$ cross zero. A
satisfaction-only report would say only that every learner succeeded. The
audit says more precisely that there is real role variation, but not a
mechanism-predictive one.

\subsection{Reward/Satisfaction Tradeoff}

A reward/satisfaction tradeoff arises when the native task objective and the
temporal-logic objective pull toward different policies, a coarser split than
the vacuity effects reported above. In that situation the matched-reference
question is
ineligible: there is no fair comparison among equally satisfying,
equally-returning policies until the tradeoff itself has been reported.

Safe Interruptibility is the clearest instance. The task reward favors
pressing a button that disables interruption. Task-only learning takes that
button in $47/50$ seeds and obtains $\Psat=0$. The two TL-driven mechanisms
instead avoid the button in $50/50$ seeds and reach the stochastic maximum
$\Psat=0.5$. Water Tank shows the same diagnostic at the exact-policy level:
the task-only optimum receives higher native reward while sacrificing
satisfaction, whereas the monitor-based objectives reach near-unit
satisfaction with lower native return. The audit therefore does not say that
one satisfying policy is more vacuous than another; it says that the
assurance claim has changed class, from ``how was the specification used?''
to ``which objective was optimized?''

\subsection{Learnability and Stability Splits}

In a learnability split, vacuity is not yet the right question because at
least one mechanism does not reliably learn a satisfying policy. This regime
is important for the same reason as matched excess: satisfaction probability
alone hides the role of the integration mechanism. Here, however, the hidden
role is not which clause became vacuous, but whether the specification became
learnable at all.

The published Shield-RL grid tasks are the sharpest examples. In SGW9, both
task-only and sparse-acceptance Q-learning obtain a satisfying policy in
$0/28$ seeds, including checks up to $10^6$ and $2{\times}10^6$ episodes,
while acceptance-distance shaping succeeds in $28/28$. SGW15 repeats the
pattern: $0/11$ successful task-only or sparse runs, and $11/11$ successful
shaped runs. The difference is not a matched-excess statistic, because two
mechanisms never enter the matched satisfying set. It is still direct
evidence for the audit layer: two headline failures and one headline success
come from the reward signal, not from the specification being impossible.

The AI Safety Gridworlds give smaller but complementary versions of the same
regime. Absent Supervisor separates the mechanisms by satisfaction level:
task-only and sparse remain at $\Psat=0.5$, while shaping reaches $\Psat=1$
in all $50$ seeds. Whisky-and-Gold is milder still. The exact task optimum is
safe, but the base task-only run reaches $\Psat=1$ in only $27/50$ seeds;
sparse and shaped reach it in $50/50$, and a larger-budget task-only check
recovers. The diagnosis is therefore stability and sample efficiency, not
impossibility.

\subsection{Specification-Ablation Hazard}

The external DeepLTL/Reward Machines audit supports the validity-screen side
of the thesis. Among 173 published task specifications and 4,526 occurrences,
3,482 are collision-flagged, 1,297 are root-necessary, and only 631 survive
the syntactic screen; in the three DeepLTL formula files, none survives every
check. This is not learned-policy evidence, but it is evidence that
specification-ablation hazards are common enough that an audit layer should
not treat weakening as a default causal test. The same pattern holds in this
paper's own benchmark suite: default polarity-aware weakening produces no
clean training-influence candidate anywhere in it, every candidate being
rejected as a collision or an environment-relative equivalence, exactly as
Theorem~\ref{thm:comparable-ablation} predicts for deterministic dynamics and
a satisfiable specification.

That theorem could be misread as saying training influence cannot be measured
at all. It cannot: a declared disjoint-objective control (a positive control for the
validity screen), run on the same standard FrozenLake environment as the
matched-excess witness, shows the opposite. One objective requires visiting the \texttt{act} cell, the other
forbids it; the two are not entailment-related, both are exactly satisfiable,
and both share the same native task return, so
Theorem~\ref{thm:comparable-ablation} does not apply to this pair.
Sparse-acceptance training over $50$ seeds per arm converges every time to
each arm's own objective, and the learned occupancy distributions separate
sharply, squared Maximum Mean Discrepancy $\mathrm{MMD}^2=0.618$ against a seed-null threshold of $0.026$
($p=0.0005$). Training influence is measurable once the intervention is
valid; the negative result above is a property of the default test, not of
the phenomenon it was trying to measure.

\subsection{Takeaway}

The results support the paper's core warning: temporal-logic satisfaction is
an outcome, not an explanation. Reporting only $\Psat$ and task return would
collapse the cases above into ``success'', ``failure'', or ``no difference''.
The audit instead answers the three questions posed in the introduction:
\begin{itemize}
\setlength{\itemsep}{0pt}
\item \emph{AQ1: did the learned policy satisfy the formula?} Not uniformly.
Shield-RL and AI Safety Gridworlds show mechanisms that fail to learn a
satisfying policy at all, or that trade satisfaction against native reward.
Failure and tradeoff are themselves mechanism effects, not evidence that the
specification was infeasible.
\item \emph{AQ2: did the audited clause matter on the successful traces?}
Not consistently, and not in one direction. FrozenLake shows equally
satisfying, equally rewarding policies that make the same clause vacuous or
not (matched excess); Taxi, OfficeWorld, and Water Tank show the opposite --
the clause's role is fixed by the environment (invariance) or varies without
becoming attributable to the mechanism (attenuation). Null and non-null
results are not all the same.
\item \emph{AQ3: what explains that clause status?} Case by case: forced by
the environment, selected among matched alternatives, traded off against
reward, blocked by learnability, or attributable only through a properly
screened intervention. The external specification audit and
Theorem~\ref{thm:comparable-ablation} show why naive weakening is usually not
a valid training-influence test; the disjoint-objective control shows that
influence is measurable once the intervention is valid.
\end{itemize}

\section{Practitioner Guidance}
\label{sec:practitioner-guidance}

The audit regime is meant to change the next engineering action. A high
$\Psat$ only opens the review; each regime below says what to check next.

\begin{itemize}
\setlength{\itemsep}{0pt}
\item \textbf{Matched excess: inspect before deployment.} If two matched
policies satisfy and perform equally but one is more vacuous, prefer or
retrain toward the policy that exercises the clause. Add tests or scenarios
that force the trigger condition, because the high-vacuity policy has not
shown how it behaves when that clause matters.
\item \textbf{Invariance: do not credit the learner.} If every mechanism gives
the same clause role, document the role as a property of the environment,
topology, or physics. Keep the clause as a sanity check if useful, but do not
claim that the integration mechanism learned that behavior.
\item \textbf{Attenuation: do not over-attribute.} If raw vacuity varies but
matched-reference excess is near zero, report the variation without assigning it to the
reward mechanism. Use more seeds, a different occurrence, or a different
policy summary only if the engineering question needs attribution.
\item \textbf{Reward/satisfaction tradeoff: stop and reconcile objectives.}
If native return and monitor satisfaction select different policies, vacuity
comparison is premature. Revisit the task reward, the formula, or the
acceptance threshold before treating either objective as the assurance target.
\item \textbf{Learnability/stability split: change the training setup.} If a
satisfying policy is known or found by another mechanism but the current one
does not reliably learn it, treat this as a training-risk finding. Switch the
integration mechanism, curriculum, exploration schedule, or budget before
calling the specification infeasible.
\item \textbf{Ablation hazard: redesign the intervention.} If a
weakening collapses, collides, or is environment-equivalent, do not interpret
the retraining result causally. Use a screened, declared intervention, such as
a disjoint objective or a hand-justified non-equivalent mutation, before
claiming training influence.
\end{itemize}

The deployment rule is simple: report satisfaction, then ask what produced it.
A learned policy may satisfy a temporal-logic monitor and still leave the
assurance argument incomplete if the relevant clause was avoided, forced,
traded off, or never learnable under the chosen integration mechanism.

\section{Threats to Validity}
\label{sec:discussion-limitations}

\subsection{Construct Validity}

The main construct risk is whether the regime labels are merely narrative
categories. They are not assigned from prose alone. Matched excess requires
matched $\Psat$ and native return plus a nonzero matched-reference statistic.
Invariance requires either uniform learned outcomes or an environmental
floor showing that the clause role is forced. Tradeoff is reported when
policies are not matched because native return and satisfaction diverge.
Learnability/stability splits are reported only when a satisfying policy is
known or reachable under another mechanism, but a mechanism under test either
does not find one or needs a materially larger budget to do so. Attenuation is
reported when raw vacuity exists but mechanism-level intervals cross zero. The
labels are therefore measurement outcomes, not interpretive decorations.

The vacuity statistic itself inherits the standard mutation-based view of
formal vacuity~\cite{beer1997vacuity,kupferman2003vacuity}. Conditioning on
successful traces avoids declaring a policy non-vacuous merely because it
usually fails the formula. The running example in \S\ref{sec:running-example}
checks this construction by hand, on a case simple enough to verify without
trusting the implementation. Root-necessary and collision-prone occurrences
are screened rather than treated as ordinary evidence.

\subsection{Internal Validity}

For the matched-excess witness, implementation artifacts are the largest
risk because the claim rests on one standard-environment cell. The artifact
therefore reruns the cell across $50$ seeds, reward-magnitude changes
($0.1\times$ and $10\times$), Q-learning and SARSA, and learning-rate and
epsilon-decay perturbations. These checks do not make FrozenLake a survey,
but they make the one matched-excess witness hard to dismiss as a single
seed or scale artifact.

For the external learnability/stability rows, the risk is the opposite:
failure to learn may be a budget artifact. The paper therefore distinguishes
persistent failure from stability/sample-efficiency. Shield SGW9/SGW15 were
checked up to $10^6$ task-only episodes and $2{\times}10^6$
sparse-acceptance episodes without changing the failure pattern.
Whisky-and-Gold was not: task-only recovers under a wider diagnostic, so it is reported
as a sample-efficiency/stability split rather than impossibility.

\subsection{External Validity}

The environments are still small. That is a real limitation, especially for
claims about deep RL or continuous control. The reason is methodological:
exact monitors, exact environmental floors, exact convergence gaps, and exact
occupancy measures are central to the audit. Scaling the protocol would
replace them with approximate counterparts. The finite-trace monitor itself
still runs exactly on sampled trajectories, so $\Psat$ and $\Vpsi$ can be
estimated by Monte Carlo with confidence intervals. Matched sets would need
nearest-neighbour or kernel matching over policy summaries rather than exact
enumeration. Environmental floors would need sample-based constrained
optimization, statistical model checking, or learning-based MDP verification
techniques such as those used to approximate verification in large MDPs
~\cite{brazdil2014verification}. These substitutions are plausible, but they
would turn exact regime labels into statistical claims. Exact methods remain
most valuable for safety-critical finite subsystems -- protocol controllers,
task monitors, shields, deployment gates -- where the state abstraction is
small enough that a wrong regime label would be more costly than the
additional modelling effort.

The paper partly addresses external validity by using published artifacts:
Shield-RL examples, AI Safety Gridworlds, and published DeepLTL/Reward
Machines specifications: the dynamics and rewards in those
rows were not tuned to produce the reported regimes.

\subsection{Conclusion Validity}

Empirical RL is known to be sensitive to seeds, aggregation, and
implementation details~\cite{henderson2018matters,agarwal2021statistical};
the multi-seed runs, convergence checks against exact finite-horizon optima,
bootstrap intervals, and cross-algorithm and hyperparameter checks reported
throughout this paper follow that discipline directly. Bootstrap intervals
are over trained policies, not individual traces, and
matched sets are recomputed inside bootstrap replicates. Negative and
ineligible outcomes are reported rather than dropped. No single table should
be read as a universal ranking of mechanisms: shaping is decisive in several
published learnability cases, sparse acceptance is the most vacuity-prone
mechanism in FrozenLake, and task-only can be optimal but unstable in
Whisky-and-Gold. This variation is the empirical point.

\section{Conclusion}
\label{sec:conclusion}

Temporal-logic satisfaction is necessary evidence for specification-guided
RL, but it is not an explanation of the learned policy. This paper introduced
an audit layer for asking the missing question: what role did the
specification actually play? The layer measures conditional vacuity on
successful traces, compares policies against environmental and matched
baselines, and screens specification mutations before treating them as causal
training interventions.

The main empirical lesson is deliberately heterogeneous. The audit finds
matched excess, invariance, attenuation, reward/satisfaction tradeoffs,
learnability/stability splits, and ablation hazards across standard
benchmarks, published RL artifacts, and 173 published task specifications.
That heterogeneity is the result. It shows that success, failure, and null
differences can have different assurance meanings even when they share the
same satisfaction probability or return. A monitor can certify that a trace
was accepted; it cannot say whether a clause constrained the
policy, was avoided/forced by the environment, conflicted with reward, or
made learning possible.

Future work should take this audit layer in three directions. First, the exact
finite-MDP protocol should be approximated for deep and continuous RL, using
sampled monitors, confidence intervals for conditional vacuity, approximate
matched-policy sets, and statistical or learning-based estimates of
environmental floors. Second, the regime labels should be integrated into
engineering workflows: high-vacuity clauses should generate trigger tests,
invariant clauses should be documented as environmental assumptions, and
reward/satisfaction tradeoffs should become specification-review items.
Third, training-influence studies need better interventions than default
weakening, including screened specification edits and disjoint objectives that
preserve task comparability.

The practical conclusion is simple: \emph{do not stop at satisfaction. In
temporal-logic-guided RL, acceptance is an outcome to audit, not the end of
the assurance argument.}

\bibliographystyle{IEEEtran}
\bibliography{references}

\begin{IEEEbiography}[{\includegraphics[width=1in,height=1.25in,clip,keepaspectratio]{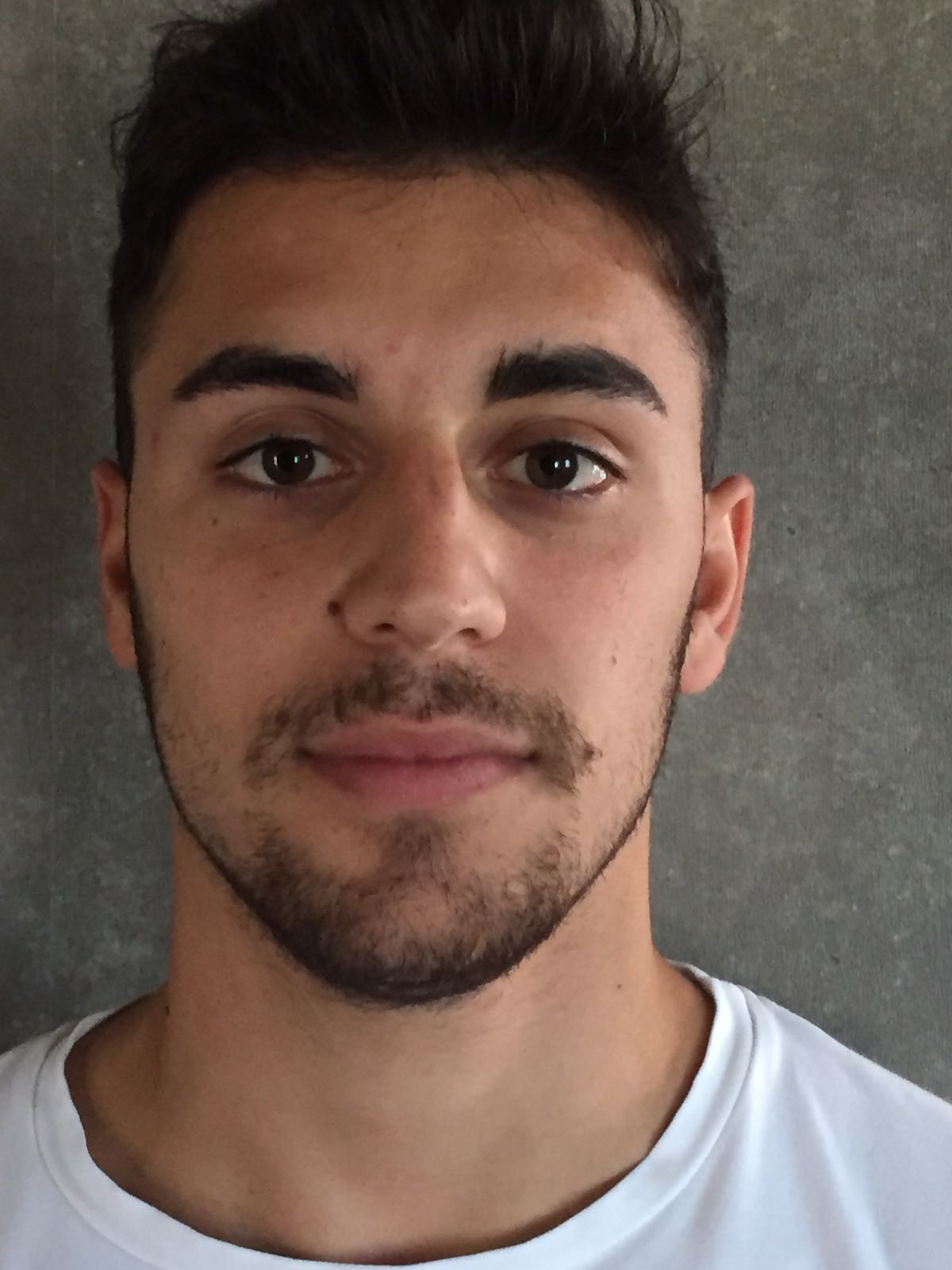}}]{Lorenzo Bacchiani}
    Lorenzo Bacchiani received his Ph.D. in Computer Science and Engineering from the University of Bologna in 2025. He is currently a
    Research Fellow at the Department of Computer Science and Engineering, University of Bologna. His interests include behavior-based
    methodologies for component interaction, adaptation, and deployment, as well as digital twin technologies in the automotive domain.
    In particular, his work explores the development and integration of vehicle, driver and road digital twins to improve system reliability and safety.
\end{IEEEbiography}

\end{document}